\newtheorem{theorem}{Theorem}
\newtheorem{prop}{Proposition}
\newtheorem{lemma}{Lemma}
\newtheorem{cor}{Corollary}
\newtheorem{remark}{Remark}
\newtheorem{defn}{Definition}
\newcommand{\coup}{{g}}
\newcommand{\bx}{{\bf x}}
\newcommand{\by}{{\bf y}}
\newcommand{\eps}{\epsilon}
\newcommand{\tr}{\mathrm{Tr} \, }
\newcommand{\R}{\mathbb{R}}
\newcommand{\N}{\mathbb{N}}
\newcommand{\Z}{\mathbb{Z}}
\newcommand{\C}{\mathbb{C}}
\newcommand{\E}{\mathbb{E}}
\newcommand{\Prob}{\mathbb{P}}
\newcommand{\ssubset}{\subset\joinrel\subset}
\title{Dynamic One Photon Localization in a Discrete Model of Quantum Optics}
\author{Joseph Kraisler}
\address{Department of Mathematics and Statistics, Amherst College, Amherst, MA 01002}
\email{jkraisler@amherst.edu}
\author{Jeffrey Schenker}
\address{Department of Mathematics, Michigan State University, East Lansing, MI 48824}
\email{schenke6@msu.edu}
\author{John C. Schotland}
\address{Department of Mathematics and Department of Physics, Yale University, New Haven, CT 06511}
\email{john.schotland@yale.edu}
\date{\today}
\begin{document}

\begin{abstract}
We consider a recently proposed model for the propagation of one-photon states in a random medium of two-level atoms. We demonstrate the existence of Anderson localization of single photon states in an energy band centered at the resonant energy of the atoms. Additionally, for a Bosonic model of the atoms the results can be extended to multiple photon states.
\end{abstract}

\maketitle

\section{Introduction}
We consider a model for the propagation of a one-photon state in a random medium composed of two-level atoms. As shown in \cite{Kraisler_22}, the probability amplitudes $\psi(\bx,t)$ and $\phi(\bx,t)$ for exciting an atom and creating a photon, respectively, at the point $\bx$ at the time $t$ obey the equations
\begin{align} 
i\partial_t\phi & = c(-\Delta)^{1/2}\phi + \coup\sqrt{\rho(\bx)} \psi 
\label{eq:continuum1i} \ , 
\\
i\partial_t \psi & = \coup\sqrt{\rho(\bx)}\phi + \Omega \psi 
\label{eq:continuum1ii}\ ,
\end{align}
where $\rho$ is the number density of the atoms, $\Omega$ is their resonance frequency, and $g$ is the atom-field coupling constant. The amplitudes obey the normalization condition
\begin{align}
\int  \left( |\phi(\bx,t)|^2 + \vert\psi(\bx,t)\vert^2 \right)d\bx = 1 \ ,
\end{align}
which has the interpretation that $\vert\phi\vert^2$ is the one-photon probability density and that $\vert\psi\vert^2$ is the atomic probability density. The density $\rho$ is taken to be a random field with prescribed correlations.

In this paper we consider the discrete analog of \eqref{eq:continuum1i} and \eqref{eq:continuum1ii} formulated on the $d$-dimensional lattice $\Z^d$:
\begin{equation} \partial_t \begin{pmatrix} \phi \\ \psi \end{pmatrix} \ = \ H \begin{pmatrix} \phi \\ \psi \end{pmatrix} \ , \quad \phi, \psi \in \ell^2(\Z^d) \ , 
\end{equation}
where the Hamiltonian $H:\ell^2(\Z^d;\C^2)  \to \ell^2(\Z^d;\C^2)$ is a bounded self-adjoint operator of the following form
\begin{equation} \label{eq:Hamiltonian}
    H \begin{pmatrix} \phi \\ \psi \end{pmatrix} \ = \ \begin{pmatrix} T & g\sqrt{\rho} \\ g\sqrt{\rho} & \Omega \end{pmatrix} \begin{pmatrix} \phi\\ \psi \end{pmatrix} \ ,
\end{equation}
where
\begin{enumerate}
\item $T$ is a bounded operator on $\ell^2(\Z^d)$. 

\item $\rho$ is a multiplication operator on $\ell^2(\Z^d)$ with diagonal matrix elements 
\begin{align}
\rho(x)=\rho_0(1+V({x})) \ ,
\end{align}
where the potential $V$ has zero mean, corresponding to density fluctuations about a constant background density $\rho_0$.

\item $g, \Omega >0$ are constant. 
\end{enumerate}

We will refer to the operator $T$ as the hopping operator. Two examples of hopping operators to keep in mind are $T=-\Delta$, the nearest neighbor Laplacian (see eq.\ \eqref{eq:laplacian} below), and $T=(-\Delta)^{1/2}$, which correspond to the cases of polaritons or photons, respectively. However, the analysis extends to a broader class of hopping operators, as described in Section \ref{sec:hamiltonian}. We assume that $\{V(x)\}_{x\in\Z^d}$ are independent and identically distributed (i.i.d.) random variables that are uniformly distributed on $[-1,1]$. Thus $\{\rho(x)\}_{x\in \Z^d}$ are i.i.d. and uniform in $[0,2\rho_0]$.  

Our study of the above problem is motivated by the fact that the analysis of Anderson localization is better developed in discrete rather than continuous settings.  In particular, existing arguments for localization in the continuum rely on the strict locality of the operator $-\Delta$ to obtain the required geometric resolvent identities. To study the spectrum of the operator \eqref{eq:Hamiltonian}, we reduce the system to a single equation of Schr\"odinger type with an energy-dependent potential, which has an energy dependence that diverges near the resonant frequency of the atoms. The main novelty in the methods presented in this work lies in the arguments employed to relate properties of \eqref{eq:Hamiltonian} to those of the reduced model.
We find that there is localization of single photons in an energy band centered at the resonant energy of the atoms. Moreover, if either the coupling constant $g$ or the average density of atoms $\rho_0$ is increased sufficiently, the spectrum is pure-point. These results form the content of the following theorem and corollary.

\begin{theorem}
\label{maintheorem}
Consider the Hamiltonian $H$ as in \eqref{eq:Hamiltonian} with $\coup^2\rho_0 > 0$. There exists $M>0$, independent of $g,\rho_0$ and $\Omega$, such that for almost every realization of the random variables $\{\rho(x)\}_{x\in\Z^d}$, the operator $H$ has only pure-point spectrum in the range $\vert E-\Omega \vert < M\coup^2\rho_0$.
\end{theorem}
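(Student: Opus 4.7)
The strategy is to reduce the $2 \times 2$ block Hamiltonian to a single Schr\"odinger-type operator on $\ell^2(\Z^d)$ with an energy-dependent (complex) potential via the Feshbach--Schur complement, and then to establish localization via the fractional moment method in the strong-disorder regime that emerges near the resonance $\Omega$. For $z \in \C \setminus \{\Omega\}$, eliminating $\psi = g\sqrt{\rho}\,\phi/(z - \Omega)$ from the eigenvalue equation reduces it to $(H(z) - z)\phi = 0$, where
\begin{equation*}
    H(z) \ := \ T + \frac{g^2 \rho}{z - \Omega} \ .
\end{equation*}
The Schur complement formula identifies the $(1,1)$ block of $(H - z)^{-1}$ with $(H(z) - z)^{-1}$, and the remaining blocks are obtained from $(H(z) - z)^{-1}$ by left/right multiplication by the bounded multiplication operator $g\sqrt{\rho}/(z - \Omega)$ plus an explicit diagonal constant. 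Consequently, exponential decay of $|(H(z) - z)^{-1}(\bx,\by)|$ in $|\bx - \by|$ transfers to exponential decay of all matrix elements of $(H - z)^{-1}$.

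Next I would establish, for some $s \in (0,1)$ and constants $C, \mu > 0$, the uniform fractional moment bound
\begin{equation*}
    \E\!\left[\, |(H(z) - z)^{-1}(\bx,\by)|^s \,\right] \ \leq \ C\, e^{-\mu |\bx - \by|} \ ,
\end{equation*}
valid for all $z$ with $|\re z - \Omega| < M g^2 \rho_0$ and $\im z > 0$. In this band the effective coupling $\gamma(z) := g^2/(z - \Omega)$ satisfies $|\gamma(z)|\rho_0 \gtrsim 1/M$, so the random potential $\gamma(z)\rho$ is effectively strong. The diagonal a priori bound follows from a rank-one (Krein) perturbation at a single site $\bx$, which yields $(H(z) - z)^{-1}(\bx,\bx) = (1/G^{(\bx)}(\bx,\bx;z) + \gamma(z)\rho(\bx))^{-1}$ with $G^{(\bx)}$ the Green's function obtained by setting $\rho(\bx) = 0$. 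Conditioning on $\{\rho(\by)\}_{\by \neq \bx}$ and changing variables to $r = \rho(\bx)$ turns the conditional expectation into $|\gamma(z)|^{-s}\int_0^{2\rho_0} |r - r_\ast|^{-s}\, dr/(2\rho_0)$ for some $r_\ast \in \C$; the elementary estimate
\begin{equation*}
    \int_0^{2\rho_0} |r - r_\ast|^{-s}\, dr \ \leq \ C_s\, \rho_0^{1-s} \ ,
\end{equation*}
valid uniformly in $r_\ast \in \C$ for $s < 1$, then yields $\E[|(H(z) - z)^{-1}(\bx,\bx)|^s] \leq C'_s\, M^s$, which is small for small $M$. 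The off-diagonal exponential decay follows from the classical Aizenman--Molchanov iteration, combining this a priori estimate with a decoupling step and a geometric resolvent identity for $T$, justified by the locality assumptions of Section~\ref{sec:hamiltonian}.

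With the fractional moment bounds in hand for $(H - z)^{-1}$ throughout the band, pure point spectrum almost surely is the standard consequence of the fractional moment method, which yields exponential dynamical localization and in particular the absence of continuous spectral measure. The principal technical obstacle I anticipate is that the effective potential $\gamma(z)\rho$ is complex-valued, depends on $z$, and is singular at $z = \Omega$. The diagonal estimate handles the singularity cleanly, since the blow-up of $|\gamma(z)|$ is precisely what drives the strong-disorder bound, and the uniform integral estimate above is insensitive to the complex argument of $\gamma(z)$. The delicate step will be the off-diagonal iteration: decoupling lemmas and resolvent identities, ordinarily formulated for real-valued, energy-independent random potentials, must be adapted to a complex potential whose values lie on a $z$-dependent ray in $\C$, all while preserving uniform control as $\im z \to 0^+$ and as $\re z$ approaches $\Omega$.
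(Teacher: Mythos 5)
Your high-level plan coincides with the paper's: Schur-complement reduction to an energy-dependent Schr\"odinger-type operator $K(E) = T + \tfrac{g^2}{E-\Omega}\rho$, rank-one a priori bound via conditioning on a single $\rho(x)$, and a fractional moment iteration. However, there are two substantive gaps.

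First, your claimed conclusion of the iteration,
\begin{equation*}
    \E\!\left[\, |(H(z) - z)^{-1}(\bx,\by)|^s \,\right] \ \leq \ C\, e^{-\mu |\bx - \by|}\ ,
\end{equation*}
is wrong for the class of hopping operators the theorem covers. The hypothesis \eqref{eq:summability} only requires $\sup_x \sum_y |T(x,y)|^s < \infty$ for some $s<1$, and the motivating example $T = (-\Delta)^{1/2}$ has matrix elements decaying only polynomially ($|x-y|^{-(d+1)}$). With such long-range hopping one cannot get exponential decay of fractional moments; the correct (and sufficient) output of the iteration is the \emph{summable} bound of \Cref{Greensfunctionbound}, namely $\sum_x \sup_{\Lambda} \E[|G_\Lambda(x,x_0;z)|^s] \lesssim \big(\tfrac{|z-\Omega|}{g^2\rho_0}\big)^s$. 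This is exactly what the paper proves, and it is all you need.

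Second, and more importantly, you assert that ``pure point spectrum almost surely is the standard consequence of the fractional moment method.'' That step is precisely where this problem departs from the standard theory, and it is the technical heart of the paper. The fractional moment bounds you obtain are for the resolvent of $K(E)$, whose potential depends on $E$; they are \emph{not} directly bounds on an eigenfunction correlator for $H$. To pass from one to the other one must: (i) observe that eigenprojections of $H$ have the explicit $2\times 2$ structure in \eqref{eq:PEmatrix} built from $\phi_E(y,x) = \langle e_{1,y}, P(\{E\}) e_{1,x}\rangle$; (ii) track the normalization identity \eqref{eq:phiEnorm}, which contains the weight $1 + g^2\rho(y)/(E-\Omega)^2$; (iii) choose a weight $W(E)$ that vanishes at $E=\Omega$ fast enough to absorb the $(E-\Omega)^{-2}$ blow-up in $W_g(E) = W(E)\big(1 + g^2\rho_0/(E-\Omega)^2\big)$; and (iv) derive a spectral averaging formula \eqref{eq:spectmeas} adapted to the $E$-dependent potential. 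None of this is off-the-shelf. Relatedly, your choice to work with $\im z > 0$ forces the effective potential $g^2\rho/(z-\Omega)$ to be complex, which you correctly flag as delicate; the paper sidesteps this entirely by working with real $E$ in finite volume (where the Green's function is a rational function with a.s.\ no pole at $E$) and passing to the infinite volume limit via $Q(y,x;W) \le \liminf_\Lambda Q_\Lambda(y,x;W)$. Your proposal leaves both the complex-potential decoupling and the correlator construction unresolved, and these are exactly the nontrivial parts.
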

We note that $$\left \| \left ( \begin{smallmatrix} 0 & \coup \sqrt{\rho} \\
\coup \sqrt{\rho} & 0 \end{smallmatrix} \right )\right \| \le g\sqrt{2\rho_0} \quad , $$ with equality holding almost surely.  It follows that the spectrum of $H$ is almost surely contained in the union of two intervals:
\begin{equation}\label{eq:Sg}
    \sigma(H) \ \subset \ [-\|T\|-\coup \sqrt{2\rho_0},\|T\|+\coup \sqrt{2\rho_0} ]  \bigcup [\Omega-\coup \sqrt{2\rho_0}, \Omega +\coup \sqrt{2\rho_0} ]  \ .
\end{equation}
In particular if $\coup^2 \rho_0$ is sufficiently large, then the interval of pure-point spectrum guaranteed by Theorem \ref{maintheorem} covers the entire spectrum of $H$, leading to the following corollary.

\begin{cor}\label{cor:fullypp}
There exists a constant $C>0$ such that if $g^2\rho_0 > C$, the spectrum of $H$ is almost surely pure-point. Specifically, for a.e. realization of the random variables  $\{\rho\}_{x\in \Z^d}$ we have $\sigma_c(H)=\emptyset$.
\end{cor}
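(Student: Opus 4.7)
The plan is to combine Theorem~\ref{maintheorem} with the a priori operator-norm inclusion \eqref{eq:Sg} and choose $C$ large enough that the pure-point band $(\Omega - Mg^2\rho_0, \Omega + Mg^2\rho_0)$ delivered by Theorem~\ref{maintheorem} already engulfs both of the intervals on the right-hand side of \eqref{eq:Sg}. Once this is so, the almost sure spectrum of $H$ is contained in an interval of pure-point spectrum, and the conclusion $\sigma_c(H) = \emptyset$ follows immediately.

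Quantitatively, for the resonance-centered interval $[\Omega - g\sqrt{2\rho_0}, \Omega + g\sqrt{2\rho_0}]$ to lie inside $(\Omega - Mg^2\rho_0, \Omega + Mg^2\rho_0)$ it suffices that $g\sqrt{2\rho_0} < Mg^2\rho_0$, i.e.\ $g^2\rho_0 > 2/M^2$. For the interval $[-\|T\| - g\sqrt{2\rho_0}, \|T\| + g\sqrt{2\rho_0}]$ centered at the origin to lie inside the same band, the strictly stronger binding condition is
\begin{equation*}
Mg^2\rho_0 \ > \ \Omega + \|T\| + g\sqrt{2\rho_0} \ .
\end{equation*}
Writing $x = g\sqrt{\rho_0} \geq 0$, this is the quadratic inequality $Mx^2 - \sqrt{2}\,x - (\Omega + \|T\|) > 0$, which holds whenever $x$ exceeds the positive root. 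Defining
\begin{equation*}
C \ := \ \left( \frac{\sqrt{2} + \sqrt{2 + 4M(\Omega + \|T\|)}}{2M} \right)^{\!2} \ ,
\end{equation*}
the hypothesis $g^2\rho_0 > C$ then guarantees both inclusions, whence Theorem~\ref{maintheorem} forces $\sigma(H) = \sigma_{pp}(H)$ almost surely.

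There is really no obstacle here: the corollary is a direct bookkeeping consequence of Theorem~\ref{maintheorem} and the operator-norm bound \eqref{eq:Sg}. I would only remark that the constant $C$ produced above depends on the fixed model parameters $M$, $\Omega$, and $\|T\|$, but not on $g$ or $\rho_0$ individually --- only on the combination $g^2\rho_0$ appearing in the hypothesis, as the statement requires.
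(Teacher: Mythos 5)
Your proof is correct and follows exactly the route the paper sketches in the sentence preceding the corollary: combine the pure-point band $(\Omega - Mg^2\rho_0,\ \Omega + Mg^2\rho_0)$ from Theorem~\ref{maintheorem} with the almost-sure spectral inclusion \eqref{eq:Sg} and take $g^2\rho_0$ large enough that the band swallows both intervals. The only difference is that you make the constant $C$ explicit by solving the quadratic in $g\sqrt{\rho_0}$, whereas the paper leaves the bookkeeping implicit.
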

\begin{remark}
Here $\sigma_c(H)$ denotes the {continuous spectrum} of $H$ (see Definition \ref{def:ctsandpointspectrum} below).
\end{remark}

This paper is organized as follows. In \Cref{sec:hamiltonian}, we define the Hamiltonian \eqref{eq:Hamiltonian}, which is a discrete analog of the system \eqref{eq:continuum1i}-\eqref{eq:continuum1ii}. In \Cref{sec:Background}, we recall the necessary definitions and criteria under which an operator exhibits spectral and dynamical localization. In \Cref{sec:finitereduction}, we reduce the problem to one involving a finite dimensional operator and sketch the proof of \Cref{maintheorem}. In \Cref{sec:fractionalmoments}, we derive the necessary bounds to show that this operator exhibits dynamical localization, while in \Cref{sec:correlatorbound} we combine these bounds to prove the main theorem. We conclude the paper the extension of our results to the multi-particle problem for bosonic atoms (\Cref{sec:bosonic}).

\section{One Excitation Hamiltonian}\label{sec:hamiltonian}

The Hilbert space of interest for the analysis of the discrete analog of \eqref{eq:continuum1i} and \eqref{eq:continuum1ii} is
\begin{align}
\label{hilbert}
\mathcal H =   \ell^2(\Z^d;\C^2) \cong \ell^2(\Z^d)\otimes \C^2\cong  \left \{\begin{pmatrix}\phi \\ \psi \end{pmatrix} : \phi,\psi\in\ell^2(\Z^d;\C) \right \}\ ,
\end{align}
with the inner product 
\begin{align}
    \left \langle \begin{pmatrix}
        \phi_1 \\ \psi_1
    \end{pmatrix} \, , \  \begin{pmatrix}
        \phi_2 \\ \psi_2
    \end{pmatrix} \right \rangle = \sum_{x\in\Z^d}\left( \phi_1^*(x)\phi_2(x) + \psi_1^*(x)\psi_2(x)\right).
\end{align}
Let $\delta_x\in\ell^2(\Z^d)$ denote the basis element defined by
\begin{align}
    \delta_x(y) = \begin{cases}
        1 & \text{if } x = y \ , \\
        0 & \text{if } x\neq y \ .
    \end{cases}
\end{align}
We will use the basis of $\mathcal{H}$, which is comprised of functions of the form
\begin{equation} e_{1,x} = \begin{pmatrix}
    \delta_x \\ 0 
\end{pmatrix} \ , \quad e_{2,x} \begin{pmatrix}
    0 \\ \delta_x 
\end{pmatrix} \ . 
\end{equation}
We consider the Hamiltonian $H:\mathcal H \to \mathcal H$ given by \eqref{eq:Hamiltonian}, and write $H=H_0 + \mathcal{P}$, where
\begin{align}
     H_0 \begin{pmatrix}\phi \\ \psi\end{pmatrix} \ &= \ \begin{pmatrix} T\phi\\ \Omega \psi \end{pmatrix}\label{FreeHamiltonian}\ , \\
   \mathcal{P}\begin{pmatrix}\phi \\ \psi\end{pmatrix} \ &= \ g\begin{pmatrix}\sqrt{\rho}\psi \\ \sqrt{\rho}\phi \end{pmatrix}  \ .
    \label{Random}
\end{align}
for any $\phi,\psi\in \ell^2(\Z^d)$.
Here $H_0$ is the \emph{free Hamiltonian} and $\{\rho(x)\}_{x\in\Z^d}$ are a collection of i.i.d random variables uniformly distributed in the interval $[0,2\rho_0]$, for a positive constant $\rho_0$.
The operator $T:\ell^2(\Z^d)\to\ell^2(\Z^d)$ is self-adjoint and is defined by its matrix elements $T(x,y)$: 
\begin{align}
    T\phi(x) \ = \ \sum_{y\in\Z^d}T(x,y)\phi(y)\, , \quad \phi\in\ell^2(\Z^d) ,
\end{align}
which we require to satisfy
\begin{align}\label{eq:summability}
    \sup_{x}\sum_{y}\vert T(x,y)\vert^s \ < \ \infty\, ,
\end{align}
for some $0< s< 1$. 

The discrete Laplacian $-\Delta:\ell^2(\Z^d)\to\ell^2(\Z^d)$ is defined as
\begin{align}\label{eq:laplacian}
    (-\Delta\phi)(x) \ = \ \sum_{\|x-y\|_{1}=1}(\phi(x)-\phi(y))\ , 
\end{align}
which has matrix elements
\begin{align}
    -\Delta(x,y)\ = \ \begin{cases}
        -1 & \text{if }\|x-y\|_1 = 1 \ , \\ 
        2d & \text{if } x = y \ , \\
        0 & \text{otherwise} \ .
    \end{cases}  \ , 
\end{align}
which clearly satisfy the condition \eqref{eq:summability}, by virtue of the sum being finite and independent of $y\in\Z^d$.
Introducing the Fourier transform
$\mathcal{F}:\ell^2(\Z^d)\to L^2([0,2\pi]^d)$ and its inverse\\ $\mathcal{F}^{-1}:L^2([0,2\pi]^d)\to\ell^2(\Z^d)$ by
\begin{align}
     (\mathcal{F}\phi)(k) \ &= \ \sum_{x\in\Z^d}\phi(x)e^{-ik\cdot x}\ ,\\
     (\mathcal{F}^{-1}f)(x)\ &=\ \frac{1}{(2\pi)^{d}}\int_{[0,2\pi]^d} e^{ik\cdot x} f(k) dk \  ,
\end{align}
it is a straightforward calculation to see that
\begin{align}\label{eq:Deltamultiplier}
    \left(-\Delta\phi\right)(x) = \mathcal{F}^{-1}\left( h\mathcal{F}\phi\right)(x)\ ,
\end{align}
where the function $h(k)$ is given by
\begin{align}
     h(k) \ = \ 4\sum_{i=1}^{d}\sin^2\left(\tfrac{k_i}{2}\right)\ .
\end{align}
The Fourier representation \eqref{eq:Deltamultiplier} of $-\Delta$ leads to the definition of $(-\Delta)^{1/2}:\ell^2(\Z^d)\to\ell^2(\Z^d)$ through the formula
\begin{align}
\left((-\Delta)^{1/2}\phi\right)(x) \ = \ \mathcal{F}^{-1}\left( h^{1/2}\mathcal{F}\phi\right)(x)\, .
\end{align}
The matrix elements of $(-\Delta)^{1/2}$ satisfy the condition \eqref{eq:summability} for any $s > d/(d+1)$, as one can see using the estimate
\begin{align}
    \vert(-\Delta)^{1/2}(x,y)\vert \ \leq \ \frac{c_d}{\vert x-y\vert^{d+1}}\, ,
\end{align}
proved in \cite[Theorem 2.2]{Gebert_20}.

\section{Background and Localization Criteria}\label{sec:Background}
In this section we define spectral and dynamical localization for self-adjoint operators on $\mathcal{H}$ and state a key criterion for dynamical localization in terms of \emph{eigenfunction correlators}. We then define strong dynamical localization for random Hamiltonians.

First we recall the notions of {continuous} and {point} spectrum. 
\begin{defn}[Continuous and Point Spectrum]\label{def:ctsandpointspectrum} Let $K$ be a self-adjoint operator on a Hilbert space $\mathcal{K}$. 
\begin{enumerate}
    \item The \textbf{continuous subspace} for $K$, denoted $\mathcal{K}^c$, is the subspace consisting of vectors with purely continuous spectral measure, and the \textbf{continuous spectrum} of $K$, denoted $\sigma_c(K)$, is the spectrum of the restriction of $K$ to $\mathcal{K}^c$. 
    \item The \textbf{pure-point subspace} for $K$, denoted $\mathcal{K}^{pp}$, is the closure of the span of the eigenvectors of $K$ and the \textbf{point-spectrum}, denoted $\sigma_p(K)$, is the closure of the set of eigenvalues of $K$, i.e., the spectrum of the restriction of $K$ to $\mathcal{K}^{pp}$.
\end{enumerate}
\end{defn}
The so-called {RAGE theorem} is a fundamental result which gives dynamical descriptions of the continuous and pure-point subspaces.
\begin{theorem}[RAGE]
Let $K$ be a self-adjoint operator on a Hilbert
space $\mathcal{K}$ and let $\{A_n\}_{n=1}^\infty$ be a sequence of compact operators on $\mathcal{K}$ which converges
strongly to the identity. Then
\begin{align}
    \mathcal{K}^c \ &= \ \left\{\psi\in\mathcal{K}\ \middle |\ \lim_{n\to\infty}\lim_{T\to\infty}\frac{1}{T}\int_0^T\|A_n e^{-iKt}\psi\|dt = 0\right\}\, \\ \intertext{and}
    \mathcal{K}^{pp} \ &= \ \left\{ \psi\in\mathcal{K}\ \middle | \ \lim_{n\to\infty}\sup_{t\in\R}\|(1-A_n)e^{-iKt}\psi\| = 0\right\}\, .
\end{align}
\end{theorem}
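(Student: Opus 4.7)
The plan is to prove the two characterizations separately, leveraging the orthogonal decomposition $\mathcal{K}=\mathcal{K}^c\oplus\mathcal{K}^{pp}$ together with \emph{Wiener's theorem}: for any finite complex Borel measure $\mu$ on $\R$,
\begin{equation*}
\lim_{T\to\infty}\frac{1}{T}\int_0^T |\hat{\mu}(t)|^2\,dt \ = \ \sum_{E\in\R}|\mu(\{E\})|^2,
\end{equation*}
where $\hat{\mu}(t)=\int e^{-iEt}\,d\mu(E)$. The spectral theorem then converts time-averages of matrix elements $\langle\phi,e^{-iKt}\psi\rangle$ into sums of squared atoms of the complex spectral measures $\mu_{\phi,\psi}$, and the interplay with this identity is the driving mechanism of the proof.

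For the pure-point characterization, the easy inclusion is direct: if $K\phi=E\phi$, then $e^{-iKt}\phi=e^{-iEt}\phi$, so $\|(1-A_n)e^{-iKt}\phi\|=\|(1-A_n)\phi\|\to 0$ uniformly in $t$ by strong convergence of $A_n$ to the identity. This extends by the triangle inequality to finite linear combinations of eigenvectors and then to the closure $\mathcal{K}^{pp}$ via a standard $\epsilon/3$ argument, using that $\sup_n\|A_n\|<\infty$ by Banach-Steinhaus. For the reverse inclusion, suppose $\lim_n \sup_t\|(1-A_n)e^{-iKt}\psi\|=0$. Then for each large $n$, the orbit $\mathcal{O}=\{e^{-iKt}\psi:t\in\R\}$ lies within uniform distance $\epsilon$ of $A_n(\mathcal{O})$, which is totally bounded since $A_n$ is compact and $\mathcal{O}$ is norm-bounded. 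Hence $\mathcal{O}$ is itself precompact, and this forces $\psi\in\mathcal{K}^{pp}$, since any nonzero continuous-spectrum component $\psi_c$ would produce an orbit whose matrix elements $\langle\phi,e^{-iKt}\psi_c\rangle$ have vanishing $L^2$-time-averages by Wiener's theorem, contradicting precompactness.

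For the continuous subspace characterization, the forward direction reduces via Cauchy-Schwarz,
\begin{equation*}
\Bigl(\frac{1}{T}\int_0^T\|A_n e^{-iKt}\psi\|\,dt\Bigr)^2 \ \le \ \frac{1}{T}\int_0^T\|A_n e^{-iKt}\psi\|^2\,dt,
\end{equation*}
and approximation in operator norm of each compact $A_n$ by finite-rank operators (valid in Hilbert spaces, with uniform-in-$t$ error control coming from $\|e^{-iKt}\|=1$). For a rank-one operator $A=|\phi\rangle\langle\chi|$, one has $\|Ae^{-iKt}\psi\|^2=\|\phi\|^2\,|\langle\chi,e^{-iKt}\psi\rangle|^2=\|\phi\|^2\,|\hat{\mu}_{\chi,\psi}(t)|^2$, and for $\psi\in\mathcal{K}^c$ the complex measure $\mu_{\chi,\psi}$ has no atoms, so Wiener's theorem yields a vanishing time-average. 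The converse decomposes $\psi=\psi_c+\psi_{pp}$, applies the forward direction to $\psi_c$, and uses the pure-point characterization above to observe that $\|A_ne^{-iKt}\psi_{pp}\|\to\|\psi_{pp}\|$ uniformly in $t$, so the vanishing of the double limit forces $\psi_{pp}=0$.

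The main obstacle I expect is the converse of the pure-point characterization, where one must convert precompactness of the orbit into the statement that the spectral measure is purely atomic. The cleanest route is to use Wiener's theorem to rule out any continuous component, but one must carefully identify the right test functions $\chi$ and exploit the strong convergence of $A_n$ to extract a quantitative lower bound from the precompactness; the remaining technicalities, namely the interchange of $\lim_n$ with $\lim_T$ in the double limit and the finite-rank approximation of $A_n$, are handled uniformly in $t$ using operator-norm convergence together with $\sup_n\|A_n\|<\infty$.
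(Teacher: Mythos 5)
The paper does not prove this theorem: it states it and, in the remark that follows, cites \cite[Theorem~2.6]{aizenman2015random}, attributing the result to Ruelle, Amrein--Georgescu, and Enss. Your sketch is essentially the standard proof given in that reference (Wiener's theorem together with precompactness of orbits), so there is no alternative route in the paper to compare against.

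Your outline is correct, and the one place that still needs to be closed is exactly the one you flagged: converting precompactness of $\mathcal{O}=\{e^{-iKt}\psi:t\in\R\}$ into $\psi\in\mathcal{K}^{pp}$. The clean way to finish is: if $\psi_c=P_c\psi\neq 0$, then since $P_c$ is bounded and commutes with $e^{-iKt}$, the orbit $\mathcal{O}_c=P_c\mathcal{O}$ of $\psi_c$ is also precompact and lies on the sphere of radius $\|\psi_c\|$. Take a finite $\tfrac{1}{2}\|\psi_c\|$-net $\{x_1,\dots,x_m\}$ of $\mathcal{O}_c$; for each $t$ there is some $i$ with $|\langle x_i,e^{-iKt}\psi_c\rangle|\geq\|\psi_c\|^2-\tfrac12\|\psi_c\|^2$, so $\sum_{i=1}^m|\langle x_i,e^{-iKt}\psi_c\rangle|^2\geq\tfrac14\|\psi_c\|^4$ for \emph{all} $t$. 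Yet Wiener's theorem applied to the atomless complex measures $\mu_{x_i,\psi_c}$ makes the Ces\`aro mean of the left side tend to $0$ as $T\to\infty$, a contradiction. Two smaller points worth making explicit: (i) in the easy pure-point direction, the triangle-inequality step works because for $\psi=\sum_j c_j\phi_j$ one has $\|(1-A_n)e^{-iKt}\psi\|\le\sum_j|c_j|\,\|(1-A_n)\phi_j\|$ with the phases $e^{-iE_jt}$ dropping out — the orbit of a linear combination of eigenvectors is not a single ray, so this deserves a sentence; and (ii) you should note that Wiener's theorem is being invoked for \emph{complex} measures $\mu_{\chi,\psi}$, and that the one-sided average $\frac{1}{T}\int_0^T$ coincides with the two-sided one here because $|\hat\mu(t)|^2$ is real and even in $t$. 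With those clarifications the argument is complete.
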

\begin{remark} The above theorem summarizes separate results due to Ruelle, Amrein and Georgescu and Enss.  See \cite[Theorem 2.6]{aizenman2015random} for the version stated here and references to the original literature.
\end{remark}
Now we define two notions of localization for self-adjoint operators acting on the Hilbert space $\mathcal{H}=\ell^2(\Z^d;\C^2)$. For each $x$, we let $\pi_x:\mathcal{H}\to \mathbb{C}^2$ denote evaluation at $x$:
\begin{equation}
    \pi_x\begin{pmatrix}
        \phi_1 \\
        \phi_2
    \end{pmatrix} \ : = \ \begin{pmatrix} \phi_1(x) \\ \phi_2(x)
    \end{pmatrix} \ .
\end{equation}
\begin{defn}
Let $H$ be a self-adjoint operator on $\mathcal{H}$ and fix an interval $I\subset\R$. 
\begin{enumerate}
    \item We say that $H$ exhibits \textbf{spectral localization} in the interval $I$ if $$\sigma_c(H)\cap I\ =\ \emptyset \ . $$ 
    \item We say that $H$ exhibits \textbf{dynamical localization} in the interval $I$ if for each $x\in \Z^d$
    \begin{align}
    \sum_{y\in \Z^d} \sup_{t\in\R} \| \pi_y P(I) e^{-itH}\pi_x^\dagger \|^2 \ < \ \infty \ ,
    \end{align}
    where $P(I)$ denotes the spectral projection for $H$ associated to the interval $I$ and $\pi^{\dagger}_x$ denotes the adjoint of $\pi_{x}$. 
\end{enumerate}
\end{defn}

Given a non-negative function $W(E)$ on $\R$, we introduce the {eigenfunction correlator} $Q(y,x;W)$ associated to $H$ with weight $W$:
\begin{align}\label{eq:QW}
    Q(y,x;W) \ = \ \sup_{F\in C_0(\R),\ \|F\|_{\infty}\leq 1} \| \pi_y, W(H) F(H)\pi_{x}^\dagger \| \ .
\end{align}
It is clear that
\begin{align}
    \sup_{t\in\R} \| \pi_{y}P(I)e^{-itH}\pi_{x}^{\dagger}\| \ \leq \  Q(y,x;\chi_I)\ .
\end{align}
Thus for dynamical localization to hold in $I$, it is sufficient that $\sum_y Q(y,x;\chi_I)^2 <\infty$ for each $x$. More generally, we say that $H$ satisfies \textbf{weight $W$ dynamical localization} if
\begin{align}
    \sum_{y\in\Z^d}Q(y,x;W)^2 \ < \ \infty \ , \quad  \text{for each }x\in \Z^d \ .
\end{align}
In the proof of our main theorem, we use  the weight $$W(E)=\min \left \{\left( \frac{|E-\Omega|}{g^{2}\rho_0} \right )^\eps, \chi_{I}(E) \right \} \ , $$ with $\eps>0$ arbitrary.
The RAGE theorem can be used to show that weight $W$ dynamical localization implies spectral localization on the set $\{ W >0\}$.
\begin{lemma}\label{lem:dl->sl}
Let $H$ be a self-adjoint operator on $\mathcal{H}$ and let $W\ge 0$ be a piecewise continuous function. If $H$ exhibits weight $W$ dynamical localization, then $H$ exhibits spectral localization on $\{E \ : \ W(E) >0\}$.
\end{lemma}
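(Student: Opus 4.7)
The plan is to deduce spectral localization from RAGE by showing that the range of a suitable continuous spectral function of $H$ is contained in $\mathcal{H}^{pp}$. Fix $E_0$ with $W(E_0) > 0$. Using piecewise continuity of $W$, I will choose an open interval $J$ containing $E_0$ on which $W \ge \alpha > 0$, together with a continuous compactly supported function $h : \R \to [0,1]$ that equals $1$ on $J$ and is supported inside $\{E : W(E) \ge \alpha/2\}$. Because $h \equiv 1$ on $J$, every vector in $\mathrm{Ran}\,P(J)$ is fixed by $h(H)$, so $\mathrm{Ran}\,P(J) \subseteq h(H)\mathcal{H}$, and it suffices to prove $h(H)\mathcal{H} \subseteq \mathcal{H}^{pp}$.

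The key estimate bounds the matrix elements $\pi_x h(H) e^{-iHt}\pi_y^{\dagger}$ by $Q(x,y;W)$, uniformly in $t$. The idea is to absorb the bounded exponential into the weight: define $F_t(E) := \tfrac{\alpha}{2}\,h(E)\,e^{-iEt}/W(E)$, extended by zero where $h(E) = 0$. Since $W \ge \alpha/2$ on $\mathrm{supp}(h)$, $F_t$ lies in $C_0(\R)$ with $\|F_t\|_\infty \le 1$, and $W(E)\,F_t(E) = \tfrac{\alpha}{2}\,h(E)\,e^{-iEt}$. Taking an adjoint and appealing to the definition of $Q$ then yields
\[
\sup_{t \in \R}\,\|\pi_x h(H) e^{-iHt} \pi_y^\dagger\| \;\le\; \tfrac{2}{\alpha}\,Q(x,y;W).
\]

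To conclude, I will feed this into the RAGE criterion for $\mathcal{H}^{pp}$ with the compact approximants $A_n := \sum_{|x| \le n} \pi_x^\dagger \pi_x$. For a finitely supported $\psi = \sum_{y \in S} \pi_y^\dagger \psi_y$, Cauchy--Schwarz gives
\[
\sup_t \|(I - A_n) e^{-iHt} h(H)\psi\|^2 \;\le\; \frac{4|S|}{\alpha^2}\sum_{y \in S} \|\psi_y\|^2 \sum_{|x| > n} Q(x,y;W)^2,
\]
and the inner tail vanishes as $n \to \infty$ by the weight-$W$ dynamical localization hypothesis. Hence $h(H)\psi \in \mathcal{H}^{pp}$ for every finitely supported $\psi$; boundedness of $h(H)$, density of such $\psi$, and closedness of $\mathcal{H}^{pp}$ extend this to all of $\mathcal{H}$, giving $\sigma_c(H) \cap J = \emptyset$. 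Since $E_0$ was arbitrary in $\{W > 0\}$, this completes the proof.

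The step that will take the most care is the construction of the auxiliary cutoff $h$ so that $F_t$ is genuinely in $C_0(\R)$ with unit sup-norm---this is what forces me to build $h$ supported strictly inside $\{W > 0\}$ rather than trying to plug $\chi_J$ directly into the definition of $Q$, which would fail because $\chi_J$ is discontinuous. With $h$ in hand, the remaining Cauchy--Schwarz-plus-tail argument against the localization hypothesis is routine.
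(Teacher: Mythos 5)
Your proof is correct and takes a genuinely different route from the paper's. The paper works from the $\mathcal{K}^c$ side of RAGE: it fixes a compact $K\subset\{W>0\}$, expresses $\|P_c(K)\psi\|^2$ as a $\lim_L\lim_T$ Cesaro average of $\|\chi_{\Lambda_L^c}e^{-itH}P(K)\psi\|^2$, and bounds the summand by $\delta(K)^{-2}Q(y,x;W)^2$ with $\delta(K)=\min_K W$. You instead invoke the $\mathcal{K}^{pp}$ side of RAGE (a $\sup_t$ rather than a time average), replace the sharp spectral projection $P(K)$ by a smooth compactly supported cutoff $h(H)$, and show $h(H)\mathcal{H}\subset\mathcal{H}^{pp}$. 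Your version buys two things: the smooth $h$ makes the passage through $Q(\cdot,\cdot;W)$ literal, since $F_t=\tfrac{\alpha}{2}h\,e^{-i\cdot t}/W$ is an admissible test function in the $C_0(\R)$ supremum defining $Q$ --- whereas the paper implicitly pushes a characteristic function through that supremum, which requires an approximation argument it leaves unstated; and using the $\mathcal{K}^{pp}$ criterion with a $\sup_t$ avoids the Wiener-type identity $\|P_c(K)\psi\|^2=\lim_L\lim_T\cdots$ that the paper uses without comment. The paper's version is terser; yours is more self-contained.

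One small gap to close: your claim that $F_t\in C_0(\R)$ tacitly uses continuity of $W$ on $\mathrm{supp}(h)$, which piecewise continuity alone does not grant. You can fix this cheaply: the discontinuity set $D$ of $W$ is locally finite, so for $E_0\in\{W>0\}\setminus D$ you can shrink $J$ and $\mathrm{supp}(h)$ to miss $D$, making $W$ continuous there and the argument go through; this shows $\sigma_c(H)\cap(\{W>0\}\setminus D)=\emptyset$, and since $D$ is countable and a continuous spectral measure gives zero mass to countable sets, $\sigma_c(H)\cap\{W>0\}=\emptyset$ as well. (Alternatively, replace $W$ by a continuous minorant $W'\le W$ that is still positive near $E_0$ and note $Q(\cdot,\cdot;W')\lesssim Q(\cdot,\cdot;W)$.) Adding that sentence makes the proof complete.
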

\begin{proof}
It suffices to show that $P_c(K)=0$ for any compact set with $K\subset \{E : W(E) >0\} $. Let $K$ be such a compact set. By the RAGE theorem, if $\psi\in\mathcal{H}$, then
\begin{align}
    \|P_c(K)\psi\|^2 \ = \ \lim_{L\to\infty}\lim_{T\to\infty}\frac{1}{T}\int_0^T\| \chi_{\Lambda_L^c}e^{-itH}P(K) \psi\|^2dt\ ,
\end{align}
where $\Lambda_L = [-L,L]^d$.  It suffices to prove that the above limit vanishes for functions of finite support.  Suppose that $\psi = \sum_{x\in V} \pi^\dagger_x \psi(x)$ with $V$ finite. Then
\begin{equation}
\begin{split}
    \| \chi_{\Lambda_L^c}e^{-itH}P(K)\psi \|^2\ & =\sum_{x,x'\in V} \sum_{y\notin \Lambda_L} \vert \langle 
          \pi_y e^{-i t H} \pi_x^\dagger \psi(x) , \pi_y e^{-i tH}\pi_{x'}^\dagger \psi(x') \rangle \vert  \\
          & \le \ \sum_{x,x'\in V} \| \psi(x) \|^2 \sum_{y\notin \Lambda_L} \| \pi_y e^{-itH}P(K) \pi_x^\dagger \|^2  \\
    &\leq \ \frac{1}{\delta(K)^2} |V| \sum_{x\in V} \sum_{y\notin \Lambda_L} Q(y,x;W)^2 \ ,
\end{split}
\end{equation}
where $\delta(K)=\min_{E\in K} W(E)$ and in the first estimate we have used the Cauchy-Schwarz inequality.
Since $H$ exhibits weight $W$ dynamical localization, the right hand side goes to $0$ as $L\to\infty$.
\end{proof}

In the following, we will be concerned with \emph{random} operators on $\mathcal{H}$. Let $(\Omega,\Prob)$ be a probability space and let $\E(\cdot)=\int_\Omega \cdot \ d \Prob$. A \textbf{random Hamiltonian} on $\mathcal{H}$ is a measurable map $\omega \mapsto H(\omega)$ from $\Omega$ into the bounded self-adjoint operators on $\mathcal{H}$. We follow the usual probabilistic notation of suppressing the independent variable $\omega$ in many expressions, unless needed for clarity. 
\begin{defn}
Let $H=H(\omega)$ be a random Hamiltonian on $\mathcal{H}$ and let $I\subset \R$ be an interval. We say that
\begin{enumerate}
    \item $H$ exhibits \textbf{spectral localization} in $I$ if $\sigma_c(H)=\emptyset$ almost surely.
    \item $H$ exhibits \textbf{strong dynamical localization} in $I$ if 
     \begin{align}
       \sum_{y\in \Z^d} \E\left[\sup_{t\in\R} \| \pi_y P(I) e^{-itH}\pi_x^\dagger \|^2\right] \ < \ \infty \ .
    \end{align}
    \item $H$ exhibits \textbf{weight $W$ strong dynamical localization} if
    \begin{align}\label{eq:SDLcondition}
        \sup_{x\in \Z^d} \sum_{y\in\Z^d} \E \left [Q(y,x;W)^2 \right ] \ < \ \infty \ ,
    \end{align}
    where $Q(x,y;W)$ is the random eigenfunction correlator for $H$, defined in \eqref{eq:QW}.
\end{enumerate}    
\end{defn}

If $H$ exhibits weight $W$ strong dynamical localization, it follows that $H(\omega)$ exhibits weight $W$ dynamical localization for $\Prob$-almost every $\omega$, and thus by Lemma \ref{lem:dl->sl}, $H$ exhibits spectral localization on $\{ W>0\}$.

\section{Finite Volume Reduction and an Outline of the proof of the main theorem}\label{sec:finitereduction}
In the proof of \Cref{maintheorem}, we will make use of the restriction of the random Hamiltonian $H$ to finite volumes.  We introduce the notation
$\Lambda \ssubset \Z^d$ if and only if $\Lambda$ is a finite subset of $\Z^d$.
Given $\Lambda \ssubset \Z^d$, let $\mathcal{H}_\Lambda=\ell^2(\Lambda;\C^2)$ and let $\pi_\Lambda:\mathcal{H}\to \mathcal{H}_\Lambda$ be the natural restriction map.

The adjoint $\pi_\Lambda^{\dagger}:\mathcal{H}_{\Lambda}\to\mathcal{H}$ is then the injection map which extends a function to be zero on $\Lambda^c$.
Given an operator (or a random operator) $M$ on $\mathcal{H}$ we define
\begin{align}
    M_\Lambda \ = \ \pi_\Lambda M \pi_\Lambda^{\dagger} \ . 
\end{align}
Since $M_\Lambda$ is an operator on a finite dimensional space, it has only point spectrum.

The eigenvalue problem for the restriction $H_\Lambda$ of the random Hamiltonian $H$ to $\mathcal{H}_\Lambda$ can be reduced to a single equation for $\phi\in\ell^2(\Lambda)$, which is nonlinear in the eigenvalue parameter $E$. Written in terms of the components $\phi,\psi\in \ell^2(\Lambda)$, the eigenvalue problem for $H_\Lambda$ is of the form
\begin{align}
    T_{\Lambda}\phi + g \sqrt{\rho_\Lambda}\psi &= E\phi\ , \\
    g \sqrt{\rho_\Lambda}\phi + \Omega\psi &= E\psi\ .
\end{align}
Provided $E\neq \Omega$, the above may be reduced to the single equation
\begin{align}
    \left(T_{\Lambda}-E+\tfrac{g^2}{E-\Omega} \rho \right)\phi = 0\ .
\end{align}
For $E\in \mathbb{R}\setminus\{ \Omega\}$ fixed, we define $K_\Lambda(E): \ell^2(\Lambda)\to \ell^2(\Lambda)$ by  
\begin{align}\label{eq:Reducedfvhamiltonian}
    K_\Lambda(E)  \ = \ T_\Lambda + \tfrac{g^2}{E-\Omega} \rho_\Lambda \ .
\end{align}
Note that $E\neq \Omega$ is an eigenvalue of $H_{\Lambda}$ if and only if $E$ is an eigenvalue of $K_\Lambda(E)$. 

We now briefly outline the main steps of the proof of \Cref{maintheorem}. In \Cref{sec:fractionalmoments}, adapting standard arguments on random Schr\"odinger operators, we obtain bounds on the fractional moments of the finite volume Green's functions \begin{align}
G_{\Lambda}(y,x,E) \ = \ \langle \delta_y, (K_\Lambda(E)-E)^{-1} \delta_x \rangle    
\end{align}
of the form
\begin{align}\label{eq:gfbound}
    \sum_{y\in\Z^d}\E[\vert G_\Lambda(y,x,E)\vert^s]\ \leq \ C(s,E)\left(\frac{\vert E-\Omega\vert}{g^2\rho_0}\right)^s\ ,
\end{align}
for a constant $C(s,E)$. Next, in \Cref{sec:correlatorbound}, we use the bounds from the previous section to prove the main theorem as follows. We note that the eigenfunction correlator $Q(x,y,W)$, for the infinite volume operator $H$, can be bounded in terms  of a limit of finite volume correlators: 
\begin{align}
    Q(y,x;W) \ \le \ \liminf_{\Lambda \uparrow \Z^d} Q_\Lambda(y,x;W)  \ ,
\end{align}
where the eigenfunction correlator for $H_\Lambda$ can be expressed as 
\begin{align}\label{eq:QLambda}
    Q_\Lambda(y,x;W) \ = \ \sum_{E\in\sigma(H_\Lambda)} W(E) \|\pi_y  P(\{E\}) \pi_x^\dagger \| \ ,
\end{align}
where $P(\{E\})$ is the projection onto the spectral subspace of eigenvectors of the operator $H_\Lambda$ with eigenvalue $E$.\footnote{If the eigenvalue $E$ is non-degenerate, then 
$$ \pi_y P(\{E\}) \pi_x^\dagger  \ = \ \begin{pmatrix} \phi_{E,1}(y) \phi_{E,1}(x) & \phi_{E,1}(y) \phi_{E,2}(x) \\
\phi_{E,2}(y) \phi_{E,1}(x)  &\phi_{E,2}(y) \phi_{E,2}(x)
\end{pmatrix} \ , $$
where $\phi_E^T = (\phi_{E,1}, \phi_{E,2})$ is the corresponding normalized eigenvector.  The expression  $\pi_y P(\{E\}) \pi_x^\dagger $ used in \eqref{eq:QLambda} allows for possible degeneracy of the eigenvalues.
} 
We then use \eqref{eq:gfbound} to show that
\begin{align}
    \sup_x \sum_y \E \left [ Q(y,x,W)^2 \right ] \ \le \ \sup_x \liminf_{\Lambda \uparrow \Z^d}  \sum_{y} \E \left [ Q_\Lambda(y,x;W)^2 \right ]  \ < \ \infty \ ,
\end{align}
which corresponds to \eqref{eq:SDLcondition}, which is a sufficient condition for strong dynamical localization.

\section{Green's Function and Fractional Moment Bounds} \label{sec:fractionalmoments}
Let $\Lambda \subset \Z^d$ be a finite set. We define the \textbf{finite volume reduced Green's function} to be the rational function 
\begin{equation}
   G_\Lambda(x,x';z) \ = \ \langle e_{1,x}, (H_\Lambda-z)^{-1} e_{1,x'} \rangle \ , \quad x,x' \in \Lambda \  .
\end{equation} 
It follows that that $G_\Lambda(x,x';z)\in \R\cup \{\infty\}$ for $z\in \R$, with  poles contained in $\sigma(H)$. For $z\not \in \sigma(H)\cup \{\Omega\}$, $G_\Lambda$ solves the equation
\begin{align} \label{eq:Greensfunction}
    \left(K_\Lambda(z)-z\right)G_\Lambda(x,x';z) = \delta_{x,x'} \ .
\end{align}
Here we recall that $z \in \sigma(H)$ if and only if $z \in \sigma(K_\Lambda (z))$.  

Following Ref.~\cite{aizenman_93}, we may obtain an \textit{a priori} bound on suitable averages of the Green's function.  By the Schur complement formula, 
\begin{align}\label{eq:diagresolvent}
    G_{\Lambda}(x,x;z) \ = \ \frac{1}{\frac{g^2}{z-\Omega} \rho(x) - z - \gamma_\Lambda(x;z)} \ ,
\end{align}
where 
\begin{align}
\gamma_\Lambda(x;z) \ = \ \sum_{y,y'\in \Lambda\setminus \{x\}} T(x,y) T(y',x) G_{\Lambda\setminus \{x\}} (y,y',z)
\end{align}
is independent of $\rho(x)$.  It follows that 
\begin{multline}\label{eq:apriori}
    \E[\vert G_{\Lambda}(x,x;z)\vert^s | \{\rho(y) : y\neq x\}] \\ \leq \ \sup_{\alpha \in\R} \left (\frac{|z-\Omega|}{g^{2}} \right )^{s} \frac{1}{2\rho_0}\int_0^{2\rho_0} \frac{1}{|\rho - \alpha|^s} \ \le \  \frac{1}{1-s} \left(\frac{\vert z-\Omega\vert}{ g^{2}\rho_0}\right)^s\ .
\end{multline}
Similarly, for $x\neq x'$, we have
\begin{equation}\label{eq:offdiagresolvent}
     \E[\vert G_{\Lambda}(x,x',z)\vert^s | \{\rho(y) : y\notin \{x,x'\} \}] \ \le \ \frac{4^s}{1-s} \left(\frac{\vert z-\Omega\vert}{ g^{2}\rho_0}\right)^s \ .
\end{equation}
See \cite[Theorem 8.3]{aizenman2015random}.

Applying the resolvent identity to $H_\Lambda$ and $H_{\Lambda\setminus\{x\}}$, with $x\in \Lambda$, we find that 
\begin{align}
    G_{\Lambda}(x,x',z) = G_{\{x\}}(x,x,z)\delta_{x,x'} - \sum_{y\neq x}G_{\Lambda}(x,x,z)T(x,y)G_{\Lambda\setminus \{x\}}(y,x',z)\ .
\end{align}
For $x\neq x'$, we obtain the one-step bound
\begin{align}
    \E[\vert G_{\Lambda}(x,x',z)\vert^s]\leq \sum_{y\neq x}\vert T(x,y)\vert^s\E\left[\vert G_{\Lambda}(x,x,z)\vert^s \vert G_{\Lambda\setminus \{x\}}(y,x',z)\vert^s\right] \ .
\end{align}
Note that $G_{\Lambda\setminus\{x\}}$ is independent of the random variable $\rho(x)$.   Computing the expectation by first conditioning on $\rho(x')$ for  $x'\neq x$ and using \eqref{eq:apriori} to bound the average over $\rho(x)$, we obtain
\begin{multline}
    \label{eq:bound1}
     \E[\vert G_{\Lambda}(x,x';z)\vert^s] \\ \leq \ \frac{1}{1-s}\left(\frac{\vert z-\Omega\vert}{ g^{2}\rho_0}\right)^s\begin{cases} 1 & x=x' \\
     \sum_{y\neq x}\vert T(x,y)\vert^s\E\left[ \vert G_{\Lambda\setminus \{x\}}(y,x';z)\vert^s\right] & x\neq x'\end{cases}\ .
\end{multline}

We now define $\lambda(T)$ and $r_{z,s}$ as
\begin{align}
    \lambda(T)&= \inf_{s\in(0,1)}\left[\frac{1}{1-s}\sup_{x}\sum_{y\neq x} \vert T(x,y)\vert^s\right]^{1/s}\ , \\
    r_{z,s} &= \frac{1}{1-s}\left(\frac{\vert z-\Omega\vert}{g^2\rho_0}\right)^s\sup_{x}\sum_{y\neq x}\vert T(x,y)\vert^s\ .
\end{align}
Note that $\lambda(T) < \infty$ by the assumption \eqref{eq:summability}.

Furthermore there exists an $s\in(0,1)$ such that $r_{z,s}< 1$ if and only if $\frac{g^2\rho_0}{\vert z-\Omega\vert} >  \lambda(T)\  $,
or equivalently $ \vert z-\Omega\vert  <  \frac{g^2\rho_0}{\lambda(T)}$.

The next proposition allows us to transform the single step bound \eqref{eq:bound1} into a global bound on the fractional moments of the finite volume Green's functions. We point out that this result is essentially an application of \cite[Theorem 9.2]{aizenman2015random}; however, to keep our presentation self-contained, we include a proof.
\begin{prop}\label{Greensfunctionbound}For any $z\in\R$ such that that $\vert z-\Omega\vert < \frac{g^2\rho_0}{\lambda(T)}$, there exists $0<s<1$, such that for all $x_0\in\Z^d$,
\begin{align}\label{eq:propequation}
    \sum_{x\in\Z^d}\sup_{\Lambda\ssubset\Z^d}\E[\vert G_\Lambda(x,x_0;z)\vert^s]\leq \frac{1}{(1-s)(1-r_{z,s})}\left(\frac{\vert z-\Omega\vert}{g^2\rho_0}\right)^s \ .
\end{align}
\end{prop}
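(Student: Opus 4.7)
The plan is to iterate the one-step estimate \eqref{eq:bound1} into a Neumann-series bound for the quantity $f(x) := \sup_{\Lambda \ssubset \Z^d} \E[|G_\Lambda(x,x_0;z)|^s]$. Since $|z-\Omega| < g^2 \rho_0 / \lambda(T)$, the definition of $\lambda(T)$ produces some $s \in (0,1)$ with $r_{z,s} < 1$, and I would fix such an $s$ at the outset. Writing $a := \frac{1}{1-s}\left(\frac{|z-\Omega|}{g^2\rho_0}\right)^s$ and introducing the nonnegative kernel $A(x,y) := a|T(x,y)|^s$ for $x\neq y$, $A(x,x) := 0$, the claim reduces to showing $\sum_x f(x) \leq a/(1-r_{z,s})$.

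The first step is to derive the pointwise recursion $f \leq a\,\mathbf{1}_{\{x_0\}} + A f$. Taking $\sup_\Lambda$ in \eqref{eq:bound1} with $x' = x_0$ and exchanging the supremum with the sum over $y$ (legitimate since all terms are nonnegative), I would use the simple monotonicity bound $\sup_{\Lambda \ni x,x_0}\E[|G_{\Lambda\setminus\{x\}}(y,x_0;z)|^s] \leq f(y)$ to obtain $f(x) \leq (Af)(x)$ for $x \neq x_0$, while the $x = x_0$ case of \eqref{eq:bound1} directly gives $f(x_0) \leq a$. Iterating the recursion $n$ times yields
\begin{equation*}
f(x) \ \leq \ a \sum_{k=0}^{n-1}(A^k \mathbf{1}_{\{x_0\}})(x) + (A^n f)(x) \ .
\end{equation*}

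The main obstacle is then to show that the remainder $(A^n f)(x)$ vanishes as $n\to\infty$, which is what unlocks the full geometric series; without this step one cannot close the argument, since $f$ is not a priori in $\ell^1$. To handle it I would combine the uniform a priori bound $\|f\|_\infty \leq \frac{4^s}{1-s}\bigl(\frac{|z-\Omega|}{g^2\rho_0}\bigr)^s$ extracted from \eqref{eq:apriori}--\eqref{eq:offdiagresolvent} with the $\ell^\infty\to\ell^\infty$ row-sum norm estimate $\|A\|_{\ell^\infty \to \ell^\infty} = a \sup_x \sum_{y\neq x}|T(x,y)|^s = r_{z,s} < 1$, giving $\|A^n f\|_\infty \leq r_{z,s}^n \|f\|_\infty \to 0$. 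Passing to $n\to\infty$ then yields $f(x) \leq a\sum_{k=0}^\infty (A^k\mathbf{1}_{\{x_0\}})(x)$. Summing over $x$ via Tonelli and using the column-sum estimate $\|A\|_{\ell^1 \to \ell^1} = r_{z,s}$—which relies on the self-adjointness of $T$ so that column and row sums of $|T(\cdot,\cdot)|^s$ coincide—I arrive at $\sum_x f(x) \leq a\sum_{k=0}^\infty \|A^k \mathbf{1}_{\{x_0\}}\|_{\ell^1} \leq a \sum_{k=0}^\infty r_{z,s}^k = a/(1-r_{z,s})$, which is the claimed bound.
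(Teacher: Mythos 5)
Your proof is correct and follows essentially the same Neumann-series argument as the paper: the same kernel $K=A$ and source $g = a\mathbf{1}_{\{x_0\}}$, the same use of the $\ell^\infty$ a priori bound plus $\|A\|_{\ell^\infty\to\ell^\infty}=r_{z,s}<1$ to kill the remainder, and the same $\ell^1$ column-sum bound to sum the series. You are somewhat more explicit than the paper on two points—the precise constant $\frac{4^s}{1-s}$ in the a priori bound and the reliance on self-adjointness of $T$ so that the $\ell^1\to\ell^1$ and $\ell^\infty\to\ell^\infty$ norms of $A$ coincide—but there is no difference in approach.
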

\begin{remark}
Here and below we use the convention that $G_\Lambda(x,x_0;z) = 0$ if $\{x,x_0\} \not \subset \Lambda$.
\end{remark}
\begin{proof}
Fix $x_0\in\Lambda$ and make the identifications 
\begin{align}
    f(x) \ &= \ \sup_{\Lambda\ssubset\Z^d}\E[\vert G_{\Lambda}(x,x_0;z)\vert^s]\ , \\
    K(x,u) \ & = \ \frac{1}{1-s}\left(\frac{\vert z-\Omega\vert }{g^2\rho_0}\right)^s\vert T(x,u)\vert^s\ , \\
    g(x) \ & = \  \frac{1}{1-s} \left(\frac{\vert z-\Omega\vert}{g^2\rho_0}\right)^s\delta_{x,x_0}\ .
\end{align}
By \eqref{eq:diagresolvent} and \eqref{eq:offdiagresolvent}, $f(x)$ is finite and uniformly bounded. Clearly $K(x,u)\geq 0$ and we have
\begin{align*}
    r_{z,s} = \ \sup_{x}\sum_{u}K(x,u) \ <  \ 1\ ,
\end{align*}
by the definition of $\lambda(T)$ and our assumption on $z$. We also have
\begin{align*}
    f(x) \ \leq \ g(x) + (Kf)(x)\ ,
\end{align*}
by \eqref{eq:bound1}. Iterating this bound, we find that
\begin{align}
    f(x) \ \leq \ \sum_{n=0}^{N-1} (K^n g)(x) + (K^Nf)(x)\, ,
\end{align}
for each $N\in \N$.
Since $\|K\|_{\infty,\infty} <1$ and $f\in \ell^\infty$, the remainder term tends to $0$ and
\begin{align}
\begin{split}
    \sum_{x} f(x) \ \leq \ \sum_{x}\sum_{n=0}^{\infty} (K^n g)(x)\ \leq \ \sum_{x}\sum_{n=0}^{\infty}r_{z,s}^n g(x)\
    = \ \frac{1}{1-s} \left(\frac{\vert z-\Omega\vert}{g^2\rho_0}\right)^s\frac{1}{1-r_{z,s}}\, , 
\end{split}
\end{align}
which is the desired inequality.
\end{proof}
\section{Correlator Bound} \label{sec:correlatorbound}
It remains to bound the eigenfunction correlator by the fractional moment of the Green's function. Consider a finite volume $\Lambda \subset \Z^d$, and denote $H_\Lambda$, $G_\Lambda$ and $K_\Lambda(E)$ by $H$, $G$ and $K(E)$ for simplicity. At the end of the argument, we will take $\Lambda \uparrow \Z^d$ to obtain bounds in the infinite volume.  For now we work in a fixed finite volume.

The arguments used here follow those in \cite[Chapter 7]{aizenman2015random} quite closely, however with some key alterations to adapt the argument to the energy-dependent potential. The first observation is that the matrix form of the eigen-projections takes a simple form:
\begin{equation}\label{eq:PEmatrix}
   \pi_y P(\{E\}) \pi_x^\dagger
        \ = \ \begin{pmatrix} 1 & \frac{g\sqrt{\rho(x)}}{E-\Omega} \\
         \frac{g\sqrt{\rho(y)}}{E-\Omega} & \frac{g^2\sqrt{\rho(y)\rho(x)}}{(E-\Omega)^2} 
    \end{pmatrix} \langle e_{1,y}, P(\{E\}) e_{1,x} \rangle \ .
\end{equation}
This can be seen by using the equation $(H-E)P(\{E\})=0$ to solve for the second row of the matrix in terms of the first, and then using the equation $P(\{E\})(H-E)$ to do the same with the columns. We use \eqref{eq:PEmatrix} to bound the weighted correlator:
\begin{align}\label{eq:QLambdaub}
    Q_\Lambda(y,x;W) \ \leq  \ \sum_{E\in\sigma(H) }  W_g(E) \vert \langle e_{1,y}, P(\{E\}) e_{1,x} \rangle\vert \ =: \ \tilde Q_\Lambda(y,x;W) ,
\end{align}
where
\begin{align}
\label{eq:WE}
W_g(E) \ & = \  \ W(E)\left ( 1 + \frac{g^2 \rho_0}{(E-\Omega)^2} \right ) \ . \intertext{We take }
    \label{eq:W}
    W(E) \ & = \ \min\left\{ \left (\frac{|E-\Omega|}{g^2\rho_0} \right )^\eps,\chi_I(E) \right\} \ ,
\end{align}
where $I=(\Omega-\frac{g^2\rho_0}{\lambda(T)}, \Omega + \frac{g^2\rho_0}{\lambda(T)})$, as in \Cref{Greensfunctionbound} with $\eps>0$ arbitrary.

Our argument focuses on the correlator $\tilde Q_\Lambda(\cdot,\cdot,W)$ defined in \eqref{eq:QLambdaub}. To streamline the notation, we set
\begin{equation}\label{eq:phiE}
\phi_E(y,x) \ = \ \langle e_{1,y}, P(\{E\}) e_{1,x} \rangle
\end{equation}
for $E\in \sigma(H)$.  We note that $\phi_E(\cdot,x)$ is an eigenfunction of $K_\Lambda(E)$ (defined in \eqref{eq:Reducedfvhamiltonian}) with eigenvalue $E$, and furthermore that
\begin{equation}\label{eq:phiEnorm}
\begin{split}
    \sum_{y\in \Z^d} \left (1 + \frac{g^2 \rho(y)}{(E-\Omega)^2} \right ) |\phi_E(y,x)|^2 \ &= \ \sum_{y\in \Z^d} \tr \pi_y P(\{E\}) |e_{1,x}\rangle \langle{e_{1,x}}| P(\{E\}) \pi_y^\dagger \\ &= \ \langle e_{1,x}, P(\{E\}) e_{1,x} \rangle \ = \ \phi_E(x,x) \ . 
    \end{split}
\end{equation}
Also, since $W(E)\le 1$,
\begin{equation}
\begin{split}
    \tilde Q_\Lambda(x,x;W) \ &\le \ \sum_{E\in \sigma(H)} \left ( 1 + \frac{g^2 \rho(x)}{(E-\Omega)^2} \right )  \phi_E(x,x)  \\ & = \ \sum_{E\in \sigma(H)} \tr \pi_x P(\{E\}) \pi_x^\dagger \ = \ \tr \pi_x^\dagger \pi_x \ = \ 2 \ 
    \end{split}
\end{equation}
for each $x$.  By the Cauchy-Schwarz inequality applied to the inner product, we have 
\begin{equation}\label{eq:ipCS}
     \vert\phi_E(y,x)\vert \ \le \ \sqrt{ \phi_E(y,y) \phi_E(x,x)} \ .
\end{equation}
A second application of the Cauchy-Schwarz inequality to the summation over $E$ yields
\begin{equation}\label{eq:Qtildebound}
\tilde Q_\Lambda(y,x;W) \ \le \ 2 \ .
\end{equation}

To bound $\E[Q_\Lambda(y,x;W)]$ in terms of the fractional moments of the Green's function, we follow \cite{aizenman2015random} and introduce, for $s\in (0,1)$, the modified correlator
\begin{equation}
    \tilde Q_\Lambda(y,x;W,s) \ = \ \sum_{E\in\sigma(H) }  W_g(E)  \vert \phi_E(y,x)\vert^s \vert\phi_E(x,x)\vert^{1-s} \ .
\end{equation}
We also note the upper bound
\begin{equation}\label{eq:modifiedcorrub}
    \tilde Q_\Lambda(y,x;W) \ \le \ \sqrt{ \tilde Q_\Lambda(x,y;W,s)\tilde Q_\Lambda(y,x;W,s) } \ .
\end{equation}
The proof of \eqref{eq:modifiedcorrub}, which closely follows the proof of \cite[Eq. (7.44)]{aizenman2015random}, involves several applications of the Cauchy-Schwartz inequality, first to the inner product (see \eqref{eq:ipCS}) and then to the summation over $E$.

We now proceed to estimate $\E[\tilde Q_\Lambda(y,x;W,s)]$ in terms of the Green's function.  Let $K(E)$ be the reduced Hamiltonian defined in \eqref{eq:Reducedfvhamiltonian}. Let $x\in \Z^d$ and define $\hat{K}_x(E)$ to be the operator obtained from $K(E)$ by replacing the density $\rho(x)$ with an arbitrarily chosen value $\hat \rho_x$. That is,
\begin{align}
    \hat{K}_x(E) \ = \ K(E) + \frac{g^2\hat \rho_x- g^2\rho(x)}{E-\Omega} \delta_x \langle \delta_x, \cdot \rangle  \ .
\end{align}
If $E\in \sigma(H)$, then $K(E)\phi_E(\cdot,x) = E \phi_E(\cdot,x)$ and thus
\begin{equation}\label{eq:formofphiE}
\begin{split}
\phi_E(y,x) \ &= \ \frac{g^2\hat \rho_x- g^2\rho(x)}{E-\Omega} \langle \delta_y, (\hat{K}_x(E)-E)^{-1}\delta_x\rangle \phi_E(x,x) \\ &= \ \frac{g^2\hat \rho_x- g^2\rho(x)}{E-\Omega}\hat G(y,x,E) \phi_E(x,x)\ ,   
\end{split}
\end{equation}
where $\hat G(y,x,E)$ is the Green's function associated to $\hat K_x(E)$: 
\begin{align}
    \hat G(y,x,E) \ = \ \langle \delta_y, (\hat K_x(E)-E)^{-1}\delta_x\rangle\, .
\end{align}
Upon setting $y=x$ in \eqref{eq:formofphiE}, we see that 
\begin{align}\label{eq:spectrumcharacterization}
\text{$E\in \sigma(H)$ with $\phi_E(x,x)\neq 0$}\quad \text{if and only if} \quad
 \Gamma_x(E)  \ = \ \frac{g^2\rho(x)-g^2\hat\rho_x}{E-\Omega}\ ,
\end{align}
where 
\begin{align}
    \Gamma_x(E) \ = \ -\langle \delta_x,(\hat{K}_x(E)-E)^{-1}\delta_x\rangle^{-1} \ .
\end{align}
This characterization of the spectrum of $H$ leads to the identity
\begin{align}\label{eq:spectmeas}
    \sum_{E\in \sigma(H) } f(E) \phi_E(x,x)   \ = \ \int f(E) \, \delta\left(\Gamma_x(E) - \frac{g^2\rho(x)-g^2\hat\rho_x}{E-\Omega}\right) dE\ ,
\end{align}
which is valid for any function defined on a neighborhood of the spectrum. In obtaining \eqref{eq:spectmeas} we have made use of the identities, 
\begin{equation}
\begin{split}
    \frac{d\Gamma_x(E)}{dE}+\frac{g^2\rho(x)-g^2\hat\rho_x}{(E-\Omega)^2} \ =& \ \Gamma_x(E)^2 \sum_{y\in \Z^d} \left (1 + \frac{g^2 \rho(y)}{(E-\Omega)^2} \right ) |\hat G(y,x,E)|^2 \\ 
    =& \ \sum_{y\in \Z^d} \left (1 + \frac{g^2 \rho(y)}{(E-\Omega)^2} \right ) \frac{|\phi_E(y,x)|^2}{|\phi_E(x,x)|^2} \ = \ \frac{1}{\phi_E(x,x)} \ ,
    \end{split}
\end{equation}
where the second line holds at points $E\in \sigma(H)$ and follows from \eqref{eq:formofphiE}, \eqref{eq:spectrumcharacterization} and \eqref{eq:phiEnorm}.

Using \eqref{eq:spectmeas} and \eqref{eq:formofphiE}, we find the following formula for the modified correlator:
\begin{equation}\label{eq:QasG}
\begin{split}
    \tilde Q_\Lambda(y,x;W,s) \ &= \ \sum_{E\in \sigma(H)\cap I} W_g(E) |\phi_E(y,x)|^s |  |\phi_E(x,x)|^{1-s} \\
    &= \ \int_I |\hat G(y,x,E)|^s |\hat \Gamma_x(E)|^s d \nu_x(E) \ ,
    \end{split}
\end{equation}
where we have used \eqref{eq:spectrumcharacterization} to re-express the ratio $ (g^2\hat \rho_x- g^2\rho(x))/({E-\Omega})$ on the right hand side of \eqref{eq:formofphiE}, and the measure $d\nu_x(E)$ is given by
\begin{align}
    d\nu_x(E) \ = \  W_g(E)  \delta\left(\Gamma(E) - \frac{g^2\rho(x)-g^2\hat\rho_x}{E-\Omega}\right)dE\, .
\end{align}

We now average \eqref{eq:QasG} over the disorder, first computing the average with respecting to $\rho(x)$, noting that the modified Green's function $\hat G$ and $\Gamma_x(E)$ are independent of this variable.  We note that averaging over $\rho(x)$ is equivalent to conditioning on $\{\rho(y)\}_{y\neq x}$. Thus the resulting conditional expectation can be estimated as
\begin{align}
\begin{split}
   \E\left[ \vert \hat G(y,x,\right.&E\left.)\vert^{s} \vert\Gamma_x(E)\vert^{x}\delta\left(\Gamma(E) - \frac{g^2\rho(x)-g^2\hat\rho_x}{E-\Omega}\right) \middle | \{ \rho(y) \}_{y\neq 0} \right] \\ 
   =& \ \vert \hat G(y,x,E)\vert^{s}\frac{1}{2g^2\rho_0}  \int_0^{2g^2\rho_0} \left\vert\frac{\lambda-g^2\hat\rho_x}{E-\Omega}\right\vert^{s}   \delta\left(\Gamma_x(E) - \frac{\lambda -g^2\hat\rho_x}{E-\Omega}\right)d\lambda \\ 
   \le & \ \vert \hat G(y,x,E)\vert^{s} \left\vert\frac{2g^2\rho_0 }{E-\Omega}\right\vert^{s} 
   \frac{|E-\Omega|}{2g^2\rho_0} \int_0^{2g^2\rho_0} \delta\left(\Gamma_x(E)(E-\Omega) - \lambda+g^2\hat\rho_x \right) d\lambda \\
   \le & \ \left\vert\frac{E-\Omega}{2g^2\rho_0} \right\vert^{1-s} \vert \hat G(y,x,E)\vert^{s} \, .
   \end{split}
\end{align}
Averaging over the remaining random variables, as well as over $\hat\rho$ (with a distribution that is uniform in $[0,2\rho_0]$ and independent of all other variables), we find that 
\begin{equation}
    \E\left[ \tilde Q_\Lambda(y,x;W,s) \right]
    \ \leq  \ \int_I  \E\left [\vert G(y,x,E)\vert^{s} \right ] \left\vert\frac{E-\Omega}{2g^2\rho_0} \right\vert^{1-s} W_g(E)dE  \ .
\end{equation}

Returning to the correlator $Q_\Lambda(y,x;W)$, using \eqref{eq:QLambdaub}, \eqref{eq:modifiedcorrub}, the Cauchy-Schwartz inequality and the symmetry of the Green's function ($|G(y,x,E)|=|G(x,y,E)|$), we find that
\begin{equation}\label{eq:finalbound}
\begin{split}
    \E\left [ Q_\Lambda(y,x;W) \right ] \ & \le \ \sqrt{\E\left [\tilde Q_\Lambda(y,x;W,s) \right]\E\left [\tilde Q_\Lambda(x,y;W,s) \right]} \\ & \le \ \int_I  \E\left [\vert G_\Lambda(y,x,E)\vert^{s} \right ] \left\vert\frac{E-\Omega}{2g^2\rho_0} \right\vert^{1-s} W_g(E)dE  \ .
    \end{split}
\end{equation}
With this estimate we are now ready to prove Theorem \ref{maintheorem}.

\begin{proof}[Proof of \Cref{maintheorem}]
We see from \eqref{eq:finalbound} that
\begin{equation}
    \sup_{\Lambda\ssubset\Z^d} \E\left [ Q_\Lambda(y,x;W) \right ] \ \le \ \sup_{\Lambda \ssubset \Z^d} \int_I  \E\left [\vert G_\Lambda(y,x,E)\vert^{s} \right ] \left\vert\frac{E-\Omega}{2g^2\rho_0} \right\vert^{1-s} W_g(E)dE \ .
\end{equation}
Putting together the inequality $Q(y,x;W)\le 2$, Eq.\ \eqref{eq:finalbound} and \Cref{Greensfunctionbound}, we find that
\begin{align}
    \sum_{y\in\Z^d} \sup_{\Lambda \ssubset \Z^d} \E[Q_\Lambda(y,x;W)^2] \ 
    &\leq  \  C \int_I  \vert E-\Omega\vert^{1+\eps}\left ( 1 + \frac{g^2\rho_0}{|E-\Omega|^2} \right )  dE \ < \ \infty \ ,
    \end{align}
where we have used the fact that the interval $I$ is of the form
\begin{align}
    I = \{E\ | \ \vert E-\Omega\vert < \lambda(T)^{-1}g^2\rho_0 \}\ .
\end{align}
Since $Q(y,x;W)\le \liminf_{\Lambda \uparrow \Z^d} Q_\Lambda(y,x;W)$ (see \cite[Proposition 7.6]{aizenman2015random}), it follows that 
\begin{equation}
\sum_{y\in \Z^d} \E[Q(y,x;W)^2] \ < \ \infty \ .
\end{equation}
Thus by Lemma \ref{lem:dl->sl}, there is no continuous spectrum on the set $I\setminus \{\Omega\}$.  Since a continuous measure cannot be supported at the single point $\Omega$, the theorem follows.
\end{proof}

\section{Bosonic Multiple Photon systems}\label{sec:bosonic}
If the atomic field operators discussed in ~\cite{Kraisler_22,Kraisler_23} are considered to be bosonic instead of fermionic,
then we may easily consider the case of higher numbers of excitations. It follows that 
a system with $n$-excitations is equivalent to $n$ noninteracting single-excitation systems and that the $n$-excitation Hamiltonian $H_n: \ell^2(\Z^{nd})^{2^n}\to \ell^2(\Z^{nd})^{2^n}$ can be expressed as a tensor sum of the form
\begin{align}\label{eq:nexcitation}
    H_n = \sum_{k=1}^n H_{1,k}\ ,
\end{align}
where the operators $H_{1,k}$ are defined by
\begin{align}
    H_{1,k} = I\otimes \cdots I\otimes H\otimes I\cdots\otimes I\ .
\end{align}
The spectrum of such an operator is given by the Minkowski sum
\begin{align}
    \sigma(H_n) = \overline{\sum_{k=1}^n\sigma(H)}\ .
\end{align}
For example, the two photon Hamiltonian $H_2: \ell^2(\Z^d)^4\to \ell^2(\Z^d)^4$ can be written in the form
\begin{align}
    H_2 = H\otimes I+I\otimes H=\begin{pmatrix}\label{eq:twoexcitationH}
        T_{\bx}+T_{\by} & g\sqrt{\rho(\bx)} & g\sqrt{\rho(\by)} & 0 \\
        g\sqrt{\rho(\bx)} & T_{\by}+\Omega & 0 & g\sqrt{\rho(\by)} \\
        g\sqrt{\rho(\by)} & 0 & T_{\bx}+\Omega & g\sqrt{\rho(\bx)} \\
        0 & g\sqrt{\rho(\by)} & g\sqrt{\rho(\bx)} & 2\Omega
    \end{pmatrix}\ .
\end{align}
Putting together this decomposition of $H_n$ and \Cref{cor:fullypp}, we obtain the following result.
\begin{theorem}
    Consider the two excitation Hamiltonian $H_n$ given in \eqref{eq:nexcitation}, where $\{\rho(\bx)\}_{\bx\in\Z^d}$ are i.i.d. random variables uniformly distributed in $[0,2\rho_0]$. If the operator $T$ satisfies the hypothesis \eqref{eq:summability}, then for $g^2\rho_0$ large enough to satisfy the conclusion of \Cref{cor:fullypp} and almost every realization of the random variables $\{\rho(x)\}$, the operator $H_n$ has only pure point spectrum.
\end{theorem}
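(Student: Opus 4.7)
The plan is to exploit the tensor product structure of $H_n$ to lift the single-particle localization result of \Cref{cor:fullypp} directly to the $n$-excitation Hamiltonian. Since $H_n = \sum_{k=1}^n H_{1,k}$ where $H_{1,k}$ acts as $H$ on the $k$-th tensor factor and as the identity on the others, the operators $\{H_{1,k}\}_{k=1}^n$ pairwise commute (they act nontrivially on disjoint tensor factors). They therefore admit a joint spectral decomposition, and $H_n$ is diagonal in any joint eigenbasis with eigenvalues equal to the sum of the individual eigenvalues.

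The key step is the following. By \Cref{cor:fullypp}, whenever $g^2\rho_0 > C$, there is a full-measure event $\Omega_0$ on which $H=H(\omega)$ has only pure-point spectrum. Fix $\omega\in \Omega_0$. Then $\ell^2(\Z^d;\C^2)$ admits an orthonormal basis $\{\phi_j\}_{j\in \N}$ consisting of eigenvectors of $H$, with associated eigenvalues $\{E_j\}$. For each multi-index $\mathbf j=(j_1,\dots,j_n)\in \N^n$, the simple tensor
\begin{equation}
    \Phi_{\mathbf j} \ = \ \phi_{j_1}\otimes \phi_{j_2}\otimes \cdots \otimes \phi_{j_n}
\end{equation}
satisfies $H_{1,k}\Phi_{\mathbf j} = E_{j_k}\Phi_{\mathbf j}$ for each $k$, and hence $H_n \Phi_{\mathbf j} = \bigl(\sum_{k=1}^n E_{j_k}\bigr)\Phi_{\mathbf j}$. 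Moreover, $\{\Phi_{\mathbf j}\}_{\mathbf j\in \N^n}$ is an orthonormal basis of the $n$-fold tensor product Hilbert space. Thus the closed span of the eigenvectors of $H_n$ equals the whole space, which by \Cref{def:ctsandpointspectrum} means $\sigma_c(H_n)=\emptyset$. Since this holds for every $\omega \in \Omega_0$, it holds almost surely.

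The main potential obstacle is purely bookkeeping rather than analytical: one must verify that the tensor products $\Phi_{\mathbf j}$ indeed form a complete orthonormal set, and that the event $\Omega_0$ on which the construction succeeds is the \emph{same} full-measure event for all $k$ (which is automatic since a single realization of $\{\rho(\bx)\}$ determines both $H$ and all of the $H_{1,k}$ simultaneously). No fractional-moment or Green's function analysis beyond what was already developed in Sections \ref{sec:fractionalmoments}--\ref{sec:correlatorbound} is needed: the absence of interaction between excitations in the bosonic model is precisely what allows the tensor-sum reduction, and thereby upgrades one-particle pure-point spectrum to $n$-particle pure-point spectrum without loss.
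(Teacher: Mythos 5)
Your proof is correct and follows the same route the paper intends: the paper offers no explicit proof, stating only that the result follows from ``putting together'' the tensor-sum decomposition \eqref{eq:nexcitation} and \Cref{cor:fullypp}, which is precisely the argument you spell out. Your elaboration --- choosing an eigenbasis $\{\phi_j\}$ for $H(\omega)$ on the full-measure event from \Cref{cor:fullypp}, observing that the simple tensors $\phi_{j_1}\otimes\cdots\otimes\phi_{j_n}$ form a complete orthonormal set of eigenvectors of $H_n$ so that $\mathcal{K}^{pp}$ is the whole space and hence $\sigma_c(H_n)=\emptyset$ --- fills in exactly the details the paper leaves implicit, with no gaps.
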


\section{Discussion}
We have considered a discrete model for the propagation of a one-photon state in a random medium of two-level atoms. In this setting, we have shown that the Hamiltonian of the system has only pure point spectrum in an interval centered at the resonant frequency of the atoms. The interval may be increased in size by increasing the average density of the atoms or the field-matter coupling, until it encompasses the entire spectrum of the Hamiltonian. Additionally, if the atoms are bosonic, we find that our results can be extended to the multi-photon setting. In future work, we plan to study the case of fermionic atoms and spin systems with mixed bosonic-fermionic character.
In both cases, the multi-particle system is interacting and the simple structure observed in \Cref{eq:nexcitation} no longer holds. Substantial work has been done on localization in many body quantum systems~\cite{Warzel_09, Chulaevsky_09} as well as in spin chains~\cite{Beaud_17, Bolter_22, Elgart_18, Elgart_22},  from which we believe key ideas may be applied.

\section*{Declaration of competing interest}
The authors declare no competing interests.

\section*{Data availability}
No data were generated or analyzed in this paper.

\section*{Acknowledgements}
We are grateful to Zaher Hani for valuable discussions. J.C.S. was supported, in part, by NSF Grant No. DMS-1912821 and AFOSR Grant No. FA9550-19-1-0320. J.H.S. was supported, in part, by NSF Grant No. DMS-2153946. J.E.K. was supported, in part, by Simons Foundation Math + X Investigator Award No. 376319 (Michael I. Weinstein).

\bibliographystyle{unsrt}
\bibliography{paper}

\end{document}